\documentclass[lettersize,journal]{IEEEtran}

\usepackage{color,array,amsthm}
\usepackage{graphicx}

\usepackage{cite}
\usepackage{amsmath,amssymb,amsfonts}
\usepackage{algorithmic}
\usepackage{graphicx}
\usepackage{textcomp}

\usepackage{tabularx}
\usepackage{graphicx}
\usepackage{bm}       
\usepackage{cite} 
\usepackage{float}
\usepackage{makecell}
\usepackage{xurl}
\usepackage{placeins}
\usepackage{dblfloatfix}
\newlength{\FigWidth}
\newcommand{\ignore}[1]{}  
\usepackage{comment}
\usepackage{booktabs}
\usepackage{multirow}
\usepackage{amsthm}
\newtheorem{theorem}{Theorem}
\newtheorem{lemma}[theorem]{Lemma}
\newtheorem{remark}{Remark}

\newtheorem{definition}{Definition}
\newtheorem{corollary}{Corollary}
\newtheorem{assumption}{Assumption}
\newtheorem{example}{Example}
\usepackage{algorithm}
\usepackage{algorithmic}
\usepackage{capt-of} 
\usepackage[caption=false,font=footnotesize]{subfig}
\usepackage{booktabs}
\renewcommand{\thetable}{\Roman{table}}

\begin{document}
\title{Power-Efficiency and Scalability Analysis of Magnetically-Actuated Satellite Swarms via Convex Optimization
}
\author{%
Yuta Takahashi$^{1}$, 
Seang Shim$^{2}$, 
Hiraku Sakamoto$^{1}$,
Shin-Ichiro Sakai$^{3}$
\thanks{$^{1}$ Mechanical Engineering, Institute of Science Tokyo, Meguro-ku, Tokyo 152-8550, Japan}%
\thanks{$^{2}$ Department of Space and Astronautical Science, The Graduate University for Advanced Studies, Sagamihara, Kanagawa 252-5210, Japan}
\thanks{$^{3}$Spacecraft Engineering, Institute of Space and Astronautical Science, Sagamihara, Kanagawa 252-5210, Japan}%
\thanks{Corr. author: Yuta Takahashi, {\tt\small stateofyuta@gmail.com}}%
}
\maketitle
\begin{abstract}
This correspondence presents a convex-optimization-based evaluation framework of satellite-swarm-based apertures maintained by magnetic-field interactions. Spaceborne distributed apertures are composed of multiple satellites and are attractive for scientific and commercial missions because their scalability enables high-gain, narrow-beam, and large-aperture capabilities beyond the launch-size limitations. A key challenge is that the long-term maintenance of such virtual structures requires consistent formation control amid unstable orbital dynamics, and magnetic interactions generated by satellite-mounted magnetorquers offer a desirable propellant-free position-control strategy. However, the nonlinearities of the electromagnetic force and torque model lead to a nonconvex power-consumption constraint, making system-level configuration analysis difficult. To address this issue, we develop a convex optimization-based framework to analyze the power consumption of large magnetically actuated satellite swarms. The resulting analysis shows that increasing the number of satellites can improve formation-keeping power efficiency. This indicates that magnetically actuated swarm architectures provide a power-efficient alternative to the conventional few-satellite electromagnetic formation-flight concept for constructing large-scale space systems.
\end{abstract}
\section{INTRODUCTION}
Spaceborne distributed apertures are a promising architecture for future large-scale space systems. By synthesizing an aperture from multiple satellites, they can provide high-gain, narrow-beam, and large-aperture capabilities beyond the launch-size limitations of monolithic satellites. Such capabilities are important for scientific and commercial missions, including deep-space communication, direct-connectivity services with small ground or user terminals in Fig~\ref{PSA_BW3}, and resilient communication in cellular frequency bands \cite{tuzi2023satellite,shim2025feasibility}. The membrane structure system also offers a large surface area, small volumes, and reliable deployment 
along with software-based alignment calibration \cite{you2021ka} in Fig~\ref{PSA_HELIOSR}. While the scalability of membrane structures depends on material advances, distributed apertures rely on state-estimation sensors that advance more rapidly than materials. In conventional monolithic architectures, increasing antenna size directly increases structural mass, deployment complexity, and payload volume requirements. Distributed architectures mitigate these limitations by assigning the aperture function to multiple spacecraft, while distributing payload, thermal, and structural requirements among multiple satellites \cite{hadaegh2014development}. 

One of the main difficulties in realizing such virtual space structures is the long-term maintenance of the formation, and the lifetime of a distributed aperture is limited by fuel consumption. In low Earth orbit, the relative motion is continuously affected by unstable orbital dynamics and environmental perturbations, including the $J_2$ effect \cite{schweighart2002high}. The communication performance of a distributed aperture depends directly on the accuracy of relative positions, and deviations from the desired formation can degrade the sidelobe level, effective aperture, and overall link performance. Therefore, a distributed aperture requires a formation-keeping strategy that can preserve the desired geometry over long mission durations.
\begin{figure}[!tb]
    \begin{minipage}[b]{0.56\FigWidth}
        \centering
        \subfloat[Space Rigid-panel array.]{\includegraphics[width=\linewidth]{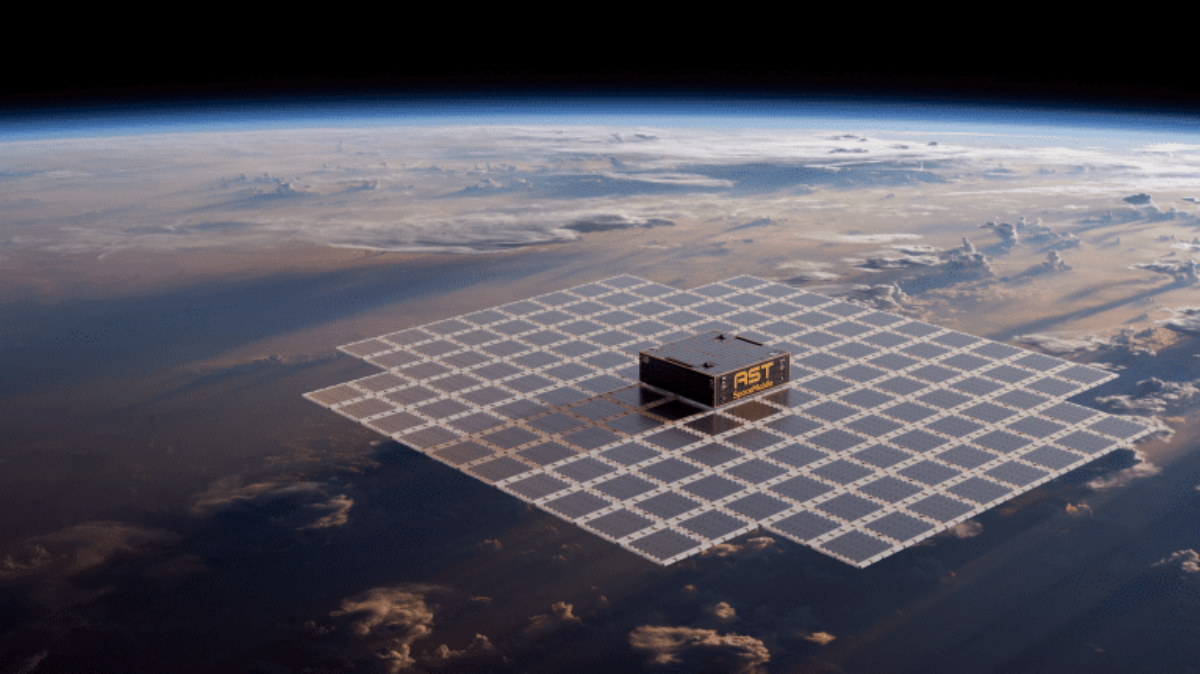}\label{PSA_BW3}}
    \end{minipage}
    \begin{minipage}[b]{0.42\FigWidth}
        \centering
        \subfloat[Space membrane array \cite{komaba2026onorbit}.]{\includegraphics[width=\linewidth]{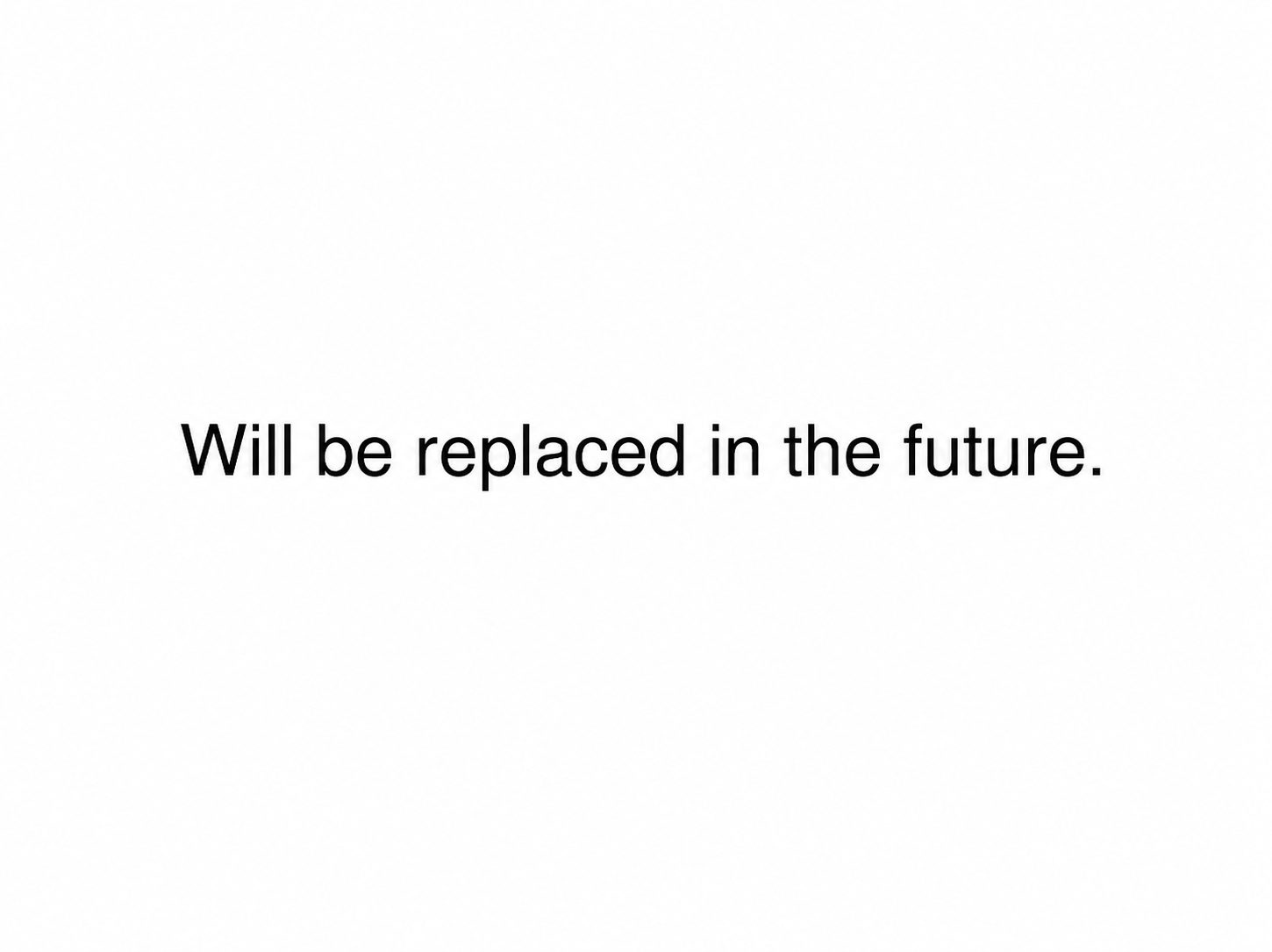}
        \label{PSA_HELIOSR}}
    \end{minipage}
    \caption{Monolithic space antenna arrays (Conceptual illustration of the BlueWalker3 © AST SpaceMobile and HELIOS-R \cite{komaba2026onorbit}).}
\end{figure}
\begin{figure}[!tb]
    \centering
{\includegraphics[width=\linewidth]{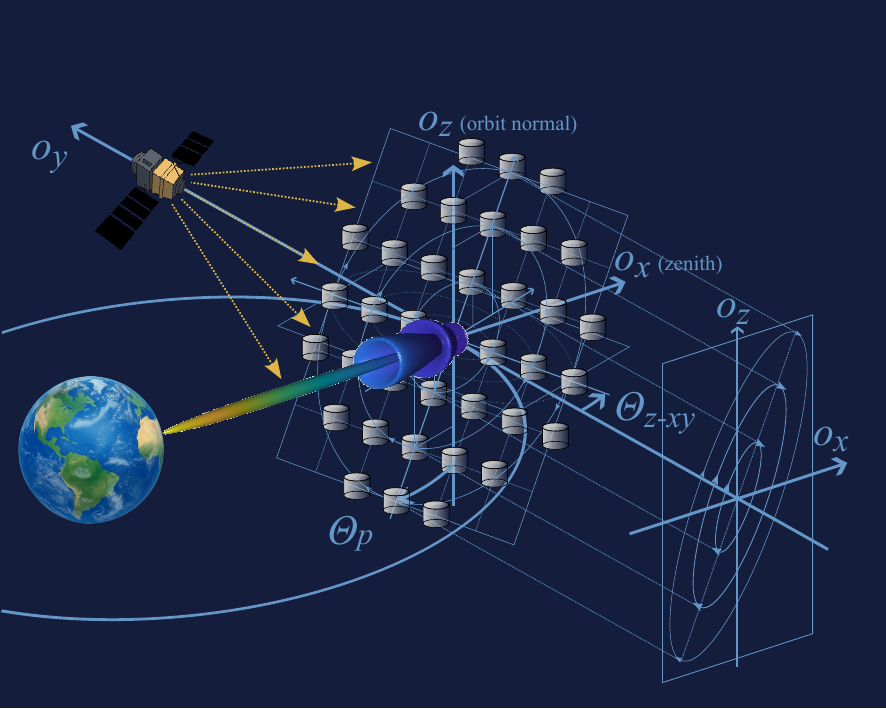}\label{PSA_homogeneous_formation_ver2}}
\vspace{-0.5cm}
    \caption{Our conceptual illustration of spaceborne distributed apertures example using satellite swarms.}
\end{figure}

A propellant-free actuation strategy is desirable for this purpose, and magnetic interactions generated by satellite-mounted magnetorquers (MTQs) provide a promising solution. MTQs are widely used as attitude actuators for Earth-orbiting satellites, and electromagnetic formation flight (EMFF) extends this magnetic actuation principle to relative-position control among multiple satellites \cite{kong2004electromagnetic,takahashi2022kinematics}. Microgravity demonstrations have validated the basic feasibility of magnetic interaction swarm control \cite{takahashi2024neural,takahashi2026certified}. Recent studies have also examined MTQ-based formation control for distributed space antenna concepts under unstable orbital dynamics \cite{takahashi2025distance,shim2025feasibility}. 

The central challenge in power evaluation for magnetically actuated swarm-keeping stems from the nonconvex nature of the magnetic interaction model. The generated force and torque are strongly coupled through the relative geometry and the dipole commands of multiple satellites. The resulting input map is bilinear, and the power required to realize a prescribed force--torque command cannot be directly inferred from independent force and torque limits. Moreover, the achievable magnetic input depends on the coil geometry and the nominal inter-satellite distance. Therefore, the formation-keeping problem in a user-defined orbit must be evaluated together with the system design variables, rather than treated as a separate control problem. This difficulty becomes more pronounced as the number of satellites increases, because the magnetic interaction network grows rapidly, and each satellite can be affected by multiple neighboring interactions. Consequently, estimating the peak and total formation-keeping power via direct nonconvex numerical optimization is computationally challenging and obscures the architecture's scalability trend. 

This correspondence addresses the issue by developing a convex-optimization-based framework to evaluate the power consumption of large magnetically actuated satellite swarms. First, we show the global optimality of the power allocation problem for two-satellite formation keeping, even though the original dipole allocation problem is nonconvex. 
Second, we combine this global optimal dipole allocation with a decentralized formation-keeping architecture. Consequently, the proposed method transforms a nonconvex system-level evaluation problem into a tractable convex-optimization-based analysis framework. 
The remainder of this correspondence is organized as follows. Section~\ref{PSA_section2} summarizes the orbital motion model and magnetic interaction model used in the analysis. Section~\ref{PSA_section3} formulates the grid-structured distributed aperture and the disturbance model for user-defined stable relative orbits. Then, we prove the global optimality of dipole allocation of the two satellite systems. Section~\ref{PSA_stable_orbit_residual_disturbances} derives the decentralized bucket-brigade model and presents the convex-optimization-based power-consumption analysis to show that increasing the number of satellites can improve formation-keeping power efficiency under the considered architecture. 
Finally, Section~\ref{conclusion} concludes this correspondence.
\section{Preliminaries}
\label{PSA_section2}
\subsection{Magnetically Actuated Swarm Control Model}
\label{magnetically_actuated_swarm_control_model}
This subsection introduces the magnetic field interaction model. We define the magnetic moment $\bm{\mu}$ and the resistance of a single-axis coil $R_{\mathrm{coil}}$ as
\begin{equation}
    \label{PSA_2_1::coil dipole moment}
        \bm{\mu} =\pi N_t a_{\mathrm{coil}}^2 c_{\mathrm{coil}}\bm{n},\quad R_{\mathrm{coil}} = {2 a_{\mathrm{coil}} N_t p_c}/{r_{\mathrm{coil}}^2}
\end{equation}
where $N_t$ is the number of coil turns, $a_{\mathrm{coil}}$ is the coil radius, $c_{\mathrm{coil}}$ is the coil current, $\bm{n}$ is a vector normal to the coil plane, $p_c$ is the wire resistivity, and $r_{\mathrm{coil}}$ is the wire radius. We assume the following sinusoidal magnetic moment for each agent \cite{takahashi2024neural}:
\begin{equation}
    \label{PSA_2_1::AC coil dipole moment}
    \bm{\mu}_{j}(t)\approx\overline{\bm{\mu}}_j \sin(\omega_{j}t+\bm{\theta}_j)=\bm{s}_{j} \sin(\omega_{{j}}t)+\bm{c}_{j} \cos(\omega_{{j}}t),
\end{equation}
where 
the amplitudes of the cosine and sine components are $\bm{s}_j\in\mathbb{R}^3$ and $\bm{c}_j\in\mathbb{R}^3$, respectively, and $\bm{\theta}_j\in \mathbb{R}^3$ are phases. 
The first-order time-averaged input $\overline{u}_{j \leftarrow k}\in\mathbb{R}^6$ exerted on the $j$-th agent by the $k$-th agent is \cite{takahashi2024neural}

\begin{equation}
\label{CGL_near_field_interaction_model}
\begin{aligned}
\overline{u}_{j \leftarrow k}
&\triangleq
\begin{bmatrix}
{f}^{\mathrm{avg}}_{j\leftarrow k}\\
{\tau}^{\mathrm{avg}}_{j\leftarrow k}
\end{bmatrix}
=\int_T\frac{\mu_0}{4\pi}Q_{j\leftarrow k}
(\mu^{b}_{k}(u)\otimes \mu^{b}_{j}(u))\frac{\mathrm{d}u}{T}\\
&\approx \frac{1}{2}\frac{\mu_0}{4\pi}Q_{j\leftarrow k}\left(s^b_{k}\otimes s^b_{j}+c^b_{k}\otimes c^b_{j}\right)\ \mathrm{if}\ \omega_j=\omega_k
\end{aligned}
\end{equation}
where $\mu_0=4\pi\times 10^{-7}$ T$\cdot$m/A and a position vector from $k$-th coil to $j$-th one ${r}_{j\leftarrow k}$ yields $Q_{{j\leftarrow k}}\in \mathbb{R}^{6\times 9}$ as \cite{takahashi2024neural}
\begin{equation*}
\begin{aligned}
&Q_{{j\leftarrow k}}=(I_2\otimes C^{A/L_{j\leftarrow k}}) 
\begin{bmatrix}
\Psi_{f}\\
\Psi_{\tau}
\end{bmatrix}
(C^{L_{j\leftarrow k}/A}\otimes C^{L_{j\leftarrow k}/A}),\\
        &\left\{
        \begin{aligned}
            &\Psi_f=
            \frac{1}{{\|r_{j\leftarrow k}\|^4}}
            {\small\begin{bmatrix}
                -6&0&0&0&3&0&0&0&3\\
                0&3&0&3&0&0&0&0&0\\
                0&0&3&0&0&0&3&0&0
                \end{bmatrix}}\\
                &\Psi_\tau=
                \frac{1}{\|r_{j\leftarrow k}\|^3}
                {\small\begin{bmatrix}
                0&0&0&0&0&1&0&-1&0\\
                0&0&2&0&0&0&1&0&0\\
                0&-2&0&-1&0&0&0&0&0
            \end{bmatrix}}
        \end{aligned}
        \right..
\end{aligned}
\end{equation*}
The line-of-sight frame $\{\mathcal{LOS}_{j\leftarrow k}\}$ \cite{takahashi2024neural} is attached to the $k$-th agent and oriented toward the $j$-th agent and  the associated coordinate transformation matrix $C^{O/L{j\leftarrow k}}\in\mathbb{R}^{3\times 3}$ is $C^{A/L_{j\leftarrow k}}=\mathcal{C}(r_{j\leftarrow k}^a,\ f^a_{j}\times r^a_{j\leftarrow k})$ where 
    \begin{equation}
        \label{PSA_new_frame}
        \mathcal{C}{(v{^a},w{^a})}=
              \begin{bmatrix}
                \mathsf{e}_x\triangleq\frac{v{^a}}{\|v{^a}\|},\mathsf{e}_y\triangleq
                \frac{{v{^a}}\times w{^a}}{\|{v{^a}}\times w{^a}\|},\mathsf{e}_x\times \mathsf{e}_y
             \end{bmatrix}.
    \end{equation}
\subsection{Passively Stable Orbits for Satellite Swarm}
\label{Passively_Stable_Orbit}
We introduce the passively stable trajectories under $J_2$ Earth gravity. 
Let ${r_{jk}}={r}_j-{r}_k=[x;y;z]$ be the relative position from the $k$-th satellite to the $j$-th one. The dynamics of the linearized relative motion in the local vertical, local horizontal (LVLH) frame are \cite{takahashi2025distance,takahashi2025scalable}
\begin{equation}
\label{PSA_Hill_dynamics}
\begin{aligned}
&\ddot{\overline{x}}-2\omega_{xy}\dot{\overline{y}}-3\omega_{xy}^2 \overline{x}-\frac{4 \omega_{xy}^2}{c_-^2/s_{J_2}}\left(2 \overline{x}+\frac{\dot{ \overline{y}}}{ \omega_{xy}}\right)=c_+(u_x+d_x) \\
& \ddot{\overline{y}}+2\omega_{xy}\dot{\overline{x}}={c_-}(u_y+d_y)\\
&\ddot{z}+\omega_z^2  z=2 l \omega_z \cos (\omega_z t+\theta_z)+(u_z+d_z) \\
&\left\{
\begin{aligned}
&\overline{x} =c_{+} x,\quad \overline{y} = c_- y,\quad{\omega}_{xy}=c_-\omega_{\mathrm{o}},\\
&\omega_z =\omega_{z\mathrm{ref}}+f_1(\delta \dot{\Omega}_{\mathrm{avg}})\approx\omega_{z\mathrm{ref}},\ r_z \sin \theta_z=z, \\
&l \sin \theta_z+\omega_z r_z \cos \theta_z=\dot{z},
\end{aligned}
\right.
\end{aligned}
\end{equation}
where $l(\delta \dot{\Omega}_{\mathrm{avg}})= -r_{\text{ref}} \sin i_j \sin i_k f_2(\delta \dot{\Omega}_{\mathrm{avg}jk}) \approx 0$ since we can naturally assume that the satellites have identical $i$. 
The analytical solution of (\ref{PSA_Hill_dynamics}) is 
\cite{takahashi2025distance,takahashi2025scalable}:
\begin{equation}
\label{PSA_CWsol}
\begin{aligned}
&\begin{bmatrix}
    {x}(t)\\
    {y}(t)\\
    {z}(t)
\end{bmatrix}
=
\begin{bmatrix}
r_{\mathrm{o}}\\
0
\end{bmatrix}
+
\begin{bmatrix}
   r_{xy}\sin{(\omega_{xy} t + \theta_{xy})}/c_+\\
    2r_{xy}\cos{(\omega_{xy} t + \theta_{xy})}/c_-\\
    (r_z+l t) \sin{({\omega}_z t+\theta_z)}
\end{bmatrix},\quad \\
&r_{\mathrm{o}}=
\begin{bmatrix}
    {2C_{1}}&{C_4-\epsilon_{2} C_1 t}
\end{bmatrix}^\top,\ \epsilon_2 = \frac{3+5s_{J_2}}{c_+c_-}\omega_{xy},
\end{aligned}
\end{equation}
where $r_{\mathrm{o}}(0,t)\in\mathbb{R}^2$ is the center position of the relative orbit. The orbital indices calculated at $t=0$ are \cite{takahashi2025distance,takahashi2025scalable}
\begin{equation}
    \label{PSA_definition_C}
    \left\{
    \begin{aligned}
    &C_{1}={c_+}/{c_-^2}(2 \overline{x}+{\dot{\overline{y}}}/{ {\omega}_{xy}}),\ C_{4}= (\overline{y}-{2 \dot{ \overline{x}}}/{{\omega}_{xy}})/c_-\\
    &r_{xy}^2= {C_2^2+C_3^2},\ \theta_{xy}= \tan^{-1}(C_3,C_2)\\ 
    &C_{2}=( \overline{y}-c_-C_4)/2, \ C_{3}=\overline{x}-2c_+C_{1}\\
    &r_{z}^2= {C_6^2+C_5^2},\ \theta_{z}= \tan^{-1}(C_6,C_5)\\
    &C_5 = \dot{z}/\omega_z,\ C_6 = z.
    \end{aligned}
    \right.
\end{equation}
Enforcing $\omega_{z}=\omega_{xy}$ and $\theta_{z}=\theta_{xy}+\tan^{-1}(2\tan \Theta_{z-xy})$ derives the desired stable trajectories $p_d(t)$ \cite{takahashi2025distance,takahashi2025scalable}
\begin{equation}
\label{PSA_desired_position}
{p}_d(t)=
\begin{bmatrix}
   (1/c_+)r_{xyd}\sin{(\omega_{xy} t + \theta_{xy})}\\
    (1/c_-)2r_{xyd}\cos{(\omega_{xy} t + \theta_{xy})}\\
    \frac{r_{xyd}}{\tan\Theta_P}\frac{\cos(\Theta_{z-xy})}{\cos(\theta_z - \theta_{xy})} \sin{({\omega}_{xy} t+\theta_{zd}(\Theta_{z-xy},t))}
\end{bmatrix}.
\end{equation}
Note that $\omega_{zd} = \omega_{xy}$ is realized via active control, or equivalently, the mismatch acts as the disturbance $d_{fz}=(\omega_{xy}^2-\omega_z^2) z$ on ${p}_d$ \cite{takahashi2025distance,takahashi2025scalable}:
\begin{equation}
\label{PSA_d_fz_disturbance}
\begin{aligned}
d_{fz}&=r_{zd}(\omega_{xy}^2 \sin{(\omega_{xy} t + \theta_z)}-\omega_{z}^2 \sin{(\omega_z t + \theta_z)}).
\end{aligned}
\end{equation}
\section{Problem Formulation: Perturbed Grid Aperture}
\label{PSA_section3}
This section formulates our problem by introducing a simplified model and states the computational problems in power estimation. We model our satellite swarm as a square formation with equally spaced satellites, as illustrated in Fig.~\ref{PSA_homogeneous_formation_ver2}. Each grid line includes a linear formation consisting of $2n+1$ satellites shown in Fig.~\ref{PSA_figure_grid_linear_approximation}, where the central satellite is indexed as 0, and the satellites at either edge are indexed as $n$ and $-n$, respectively. We define the vector from the $(-n)$th satellite to the $n$th satellite as
\begin{equation}
R_l(\tau)=r_l(\tau)\hat{p}(\tau),
\quad
\tau\in[0,2\pi/\omega_{xy}).
\end{equation}
The total number of satellites is given by
\begin{equation}
\label{PSA_N_all_square_assumption}
N_{\mathrm{all}} = N_l^2 = (2n+1)^2 ,
\end{equation}
where $N_l$ denotes the number of elements along the array. The total system mass $\overline{m}_{\mathrm{sys}}=N_{\mathrm{all}}m_{\mathrm{sat}}$, inter-distance, and array side length are user-defined constants:
\begin{equation}
\overline{m}_{\mathrm{sys}}=\mathrm{const.},\ d_{\mathrm{sat}}=\mathrm{const.},\ r_l=(2n+1)d_{\mathrm{sat}}.
\end{equation}

We assume that the on-orbit environmental disturbance force $f_d$ is given by
\begin{equation}
\label{PSA_disturbance_position_relative}
f_d=m_{\mathrm{sat}}\ K_{\mathrm{orb}}(t)\ p.
\end{equation}
where time-varying coefficient matrix $K_{\mathrm{orb}}(t)\in\mathbb{R}^{3\times 3}$ and relative position vector $p\in\mathbb{R}^{3}$. Note that we neglect the environmental disturbance forces and torques except (\ref{PSA_disturbance_position_relative}) because they are generally independent of the distance $p$ from the center and bounded.
\begin{example}
\label{PSA_example_1_2}
Consider the satellites on stable relative orbit ${p}_d(t)$ in (\ref{PSA_desired_position}). The averaging error disturbance under $J_2$ gravity is $f_d(t)=m_{\mathrm{sat}}\ K_{J_2}\ {(P_{\mathrm{ref}},i,\theta)}p$ where
\begin{equation*}
\begin{aligned}
f_d(t)&=m_{\mathrm{sat}}\left(\nabla^2 U_{J_2}-
\int_0^{2\pi} \nabla^2 U_{J_2}\frac{\mathrm{d}\theta}{2\pi}\right)+m_{\mathrm{sat}} d_{fz(t)}\\
&=m_{\mathrm{sat}}\left(K{(P_{\mathrm{ref}},i,\theta)}+
{\small\begin{bmatrix}
    0&0&0\\
    0&0&0\\
    0&0&-(\omega_z^2-\omega_{xy}^2)
\end{bmatrix}}
\right)p,
\end{aligned}
\end{equation*}
$d_{fz(t)}$ in (\ref{PSA_d_fz_disturbance}) and $K{(P_{\mathrm{ref}},i(t),\theta(t))}\in\mathbb{R}^{3\times 3}$ is \cite{schweighart2002high}
\begin{equation*}
K{(P_{\mathrm{ref}},i,\theta)}
=\frac{k_{J_2}}{2P_{\mathrm{ref}}^5}
{\small\begin{bmatrix}
12s^2_ic_{2\theta} & 4s^2_i s_{2 \theta} & 4s_{2 i} s_\theta \\
4s^2_i s_{2 \theta} & -7s^2_ic_{2\theta}& -{s_{2i} c_\theta}\\
4s_{2 i} s_\theta&-{s_{2i} c_\theta}&-5s^2_ic_{2\theta}
\end{bmatrix}}.
\end{equation*} 
\end{example}

To evaluate the required powers, one straightforward approach is to use numerical evaluations, but this becomes computationally burdensome as the number of satellites $N_{\mathrm{all}}$ increases. As shown in subsection~\ref{Passively_Stable_Orbit}, the relative orbital dynamics can be approximated as the linear time-invariant system $\dot{x}= A x(t) + d(t)$ with a time-varying external input $d(t)$. Our goal is to evaluate $x(T)$ for $N_{\mathrm{all}} \gg 1$ and its analytical solution under $x(0)=0$ is
$$
x(T) = \int_0^T e^{A(T - \tau)} d(\tau)\mathrm{d}\tau=\sum_{k=1}^K \left[ \int_{t_k}^{t_{k+1}} e^{A(T - \tau)} \mathrm{d}\tau \right] d_k
$$
where $d(t) = d_k$ for $t \in[t_k, t_{k+1})$. This method incurs a total computational cost of $\mathcal{O}(4N_{\mathrm{all}}^2K)$, which is not scalable. Alternatively, we can reduce the total cost using the eigendecomposition $A = V \Lambda V^{-1}$ and the fact that $e^{A(T - \tau)} = V e^{\Lambda(T - \tau)} V^{-1}$. This allows for the use of parallel computing, and the analytical solution is 
$$
x(T) = \int_0^T Ve^{\Lambda(T - \tau)} V^{-1} d(\tau) \mathrm{d}\tau.
$$
Although its total cost is improved as $\mathcal{O}(N_{\mathrm{all}}^3) + \mathcal{O}({N_{\mathrm{all}}^2K}/{P})$ with the parallelism $P$ and eigendecomposition cost $\mathcal{O}(N_{\mathrm{all}}^3)$, this still does not provide a scalable evaluation method. Therefore, a scalable and uniquely determined evaluation method is required for system-level design of large magnetically actuated satellite swarms.
\section{Magnetically-Actuated Swarm Keeping Evaluation}
\label{PSA_stable_orbit_residual_disturbances}
This section derives the framework to estimate the satellite's maximum power and the system's total power for formation keeping. These indices are the key factors in scalability for magnetically actuated swarms, constrained by power and thermal budgets, rather than by time-integrated consumption.
\begin{figure}[!tb]
    \centering
    \includegraphics[width=0.9\linewidth]{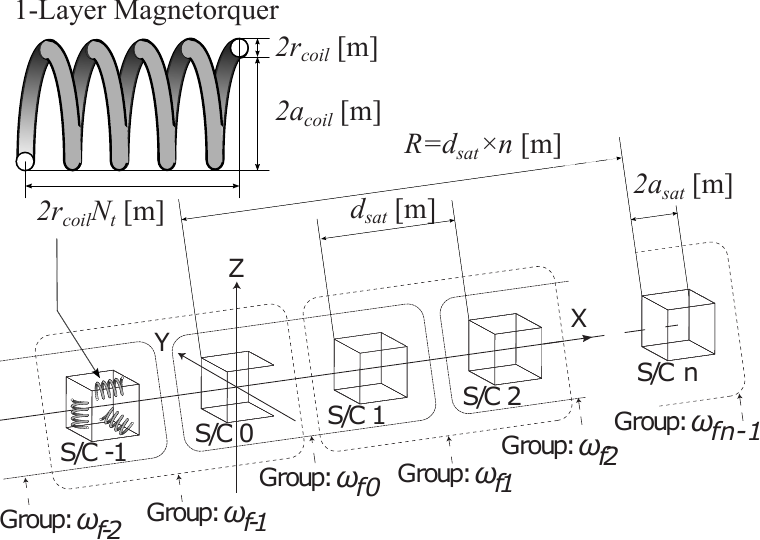}
    \caption{Grid-structured approximation for distributed space system design. Linear formation of $2n+1$ satellites with coil actuators.}
    \label{PSA_figure_grid_linear_approximation}
\end{figure}
\subsection{Optimal Decentralized Formation-Keeping Model}
We derive the decentralized control model for a magnetically-actuated swarm system. Our goal of the formation-keeping is to cancel out the disturbance force $f_{\mathrm{d}j}$ acting on each satellite by the control input $f_{\mathrm{c}j}$, i.e.,
\begin{equation}
\label{feedforward_conditions}
f_{\mathrm{d}j}=f_{\mathrm{c}j} \quad\mathrm{for}\quad j=1,\ldots,N_{\mathrm{all}}.
\end{equation}
As introduced in subsection~\ref{magnetically_actuated_swarm_control_model}, magnetic field interaction between different frequencies does not interact with each other in the first-order averaged dynamics. To represent a set of satellites driven by the same frequency, we define an arbitrary collection of satellite groups as
$$
\mathcal{G}\subseteq \{\mathfrak{g}\subseteq\{1,\ldots,N_{\mathrm{all}}\}: |\mathfrak{g}|\geq 2\}
$$
where each group $\mathfrak{g}\in\mathcal{G}$ can contain an arbitrary number of satellites. Then, the time-averaged dynamics admit nonunique feasible realizations of (\ref{feedforward_conditions}) as
\begin{equation*}
f_{\mathrm{d}j}
=
\sum_{\mathfrak{g}\in\mathcal{G}:\ j\in\mathfrak{g}}
f_{\mathrm{c}j}^{(\mathfrak{g})} \quad\mathrm{for}\quad j=1,\ldots,N_{\mathrm{all}}.
\end{equation*}
The simplest approach is to assign a single frequency to all satellites, i.e., $\mathfrak{g}=\{1,\ldots,N_{\mathrm{all}}\}$. However, finding a globally optimal solution is computationally expensive, as the optimal dipole allocation problem is an NP-hard program \cite{takahashi2024neural}, and relying on locally optimal solutions may compromise the reliability of our analysis. 

Then, we show that the globally optimal solution in the two-satellite case can be obtained by convex optimization. We consider the optimal dipole allocation for two satellites, which is a non-convex optimization with six equality constraints and belongs to the QCQP class:
\begin{equation}
\label{PSA_opt1}
\begin{aligned}
\mathrm{min}\ &J_p={\|m\|^2}/2=\|[s_j;s_k;c_j;c_k]\|^2/2\\
\mathrm{s.t.}\ &Q_{j\leftarrow k}
\left (
{s_{k}}
\otimes
{s_{j}}
+
{c_{k}}
\otimes
{c_{j}}
\right )=({8\pi}/{\mu_0})u_{j\leftarrow k}^{los}
\end{aligned}
\end{equation}
We derive its Lagrange dual problem in (\ref{PSA_opt1}) as
\begin{equation}
\label{PSA_opt2}
\mathrm{max}\ J_{d}=\frac{-\lambda^\top u^{los}_{j\leftarrow k}}{\mu_0/(8\pi)}\quad\mathrm{s.t.}\quad P_\lambda=
\begin{bmatrix}
E_3&R_\lambda\\
R_\lambda^\top &E_3
\end{bmatrix} \succeq 0
\end{equation}
where Lagrange multiplier vector $\lambda\in {\mathbb{R}}^{6}$, $R_\lambda\in\mathbb{R}^{3\times 3}$ satisfies $\mathrm{vec}(R_\lambda)=Q_{j\leftarrow k}^\top\lambda$. Despite QCQP, including our problem, not being convex, strong duality holds for QCQP with one quadratic inequality constraint provided Slater's condition holds \cite{boyd2004convex}. We show that the problem in (\ref{PSA_opt1}) holds strong duality, i.e., no duality gap exists between the primal and dual problems, and the proof is in Appendix~\ref{PSA_proof_2SC_ODA}.
\begin{lemma}
\label{PSA_lemma_2SC_ODA}
Consider the relative position $r_{j\leftarrow k}$ and the command input $u_{j\leftarrow k}^{los}=[{f_{j\leftarrow k}^{los}};{\tau_{j\leftarrow k}^{los}}]\in\mathbb{R}^6$ from $k$th agent to $j$th one. Let $\lambda^*\in\mathbb{R}^6$ be defined as the optimal Lagrange multiplier vector. Then, a global-optimum solution of the non-convex optimization is $[s_k^*,c_k^*]=-R_{\lambda^*}^{\top} [s_j^*,c_j^*]$ and $[s_j^*,c_j^*]$ satisfies
\begin{equation*}
\label{PSA_optimal_s_a_s_b}
\begin{aligned}
&{s}_j^*=\overline{\mu}_{j}\cos\theta_j,\quad {c}_j^*=\overline{\mu}_{j}\sin\theta_j\\
&\overline{\mu}_{j}=\begin{bmatrix}
\sqrt{\mathfrak{L}_1}\\\sqrt{\mathfrak{L}_2}\\\sqrt{\mathfrak{L}_3}
\end{bmatrix},
\left\{
\begin{aligned}
&|\theta_{L(1)}-\theta_{L(2)}|=\cos^{-1}({\mathfrak{L}_4}/{\sqrt{\mathfrak{L}_1\mathfrak{L}_2}})\\
&|\theta_{L(3)}-\theta_{L(1)}|=\cos^{-1}({\mathfrak{L}_5}/{\sqrt{\mathfrak{L}_3\mathfrak{L}_1}})\\
&|\theta_{L(2)}-\theta_{L(3)}|=\cos^{-1}({\mathfrak{L}_6}/{\sqrt{\mathfrak{L}_2\mathfrak{L}_3}})\\
\end{aligned}
\right.
\end{aligned}
\end{equation*}
where $\mathfrak{L}_j$ is derived by $\mathrm{vec}(\mathcal{Q}_{j\leftarrow k}^i)=Q_{j\leftarrow k}(i,:)^\top$ and
\begin{equation}
\label{PSA_R_lambda}
-\mathrm{tr}\left[R_{\lambda^*} \mathcal{Q}_{j\leftarrow k}^{i\top} 
\begin{bmatrix}
\mathfrak{L}_1&\mathfrak{L}_4&\mathfrak{L}_5\\
*&\mathfrak{L}_2&\mathfrak{L}_6\\
*&*&\mathfrak{L}_3\\
\end{bmatrix}\right]=\frac{8\pi}{\mu_0}u_{j\leftarrow k(i)}^{los},\ i\in[1,6]
\end{equation}
\end{lemma}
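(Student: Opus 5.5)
The plan is to write (\ref{PSA_opt1}) as a rank-constrained semidefinite program and show that its convex relaxation is tight; the dual (\ref{PSA_opt2}) is the dual of that relaxation. Stack $X=[s_j,c_j]\in\mathbb{R}^{3\times2}$ and $Y=[s_k,c_k]\in\mathbb{R}^{3\times2}$. Each constraint row reads
\[
(s_k\otimes s_j+c_k\otimes c_j)^\top Q_{j\leftarrow k}(i,:)^\top=\mathrm{tr}\bigl[\mathcal{Q}^{i\top}_{j\leftarrow k}XY^\top\bigr]
\]
up to the transposition convention fixed by $\mathrm{vec}$. Hence the constraints depend only on the cross block $XY^\top$ of the PSD matrix
\[
Z=\begin{bmatrix}X\\Y\end{bmatrix}\begin{bmatrix}X\\Y\end{bmatrix}^\top\succeq0,\qquad \mathrm{rank}\,Z\le2 .
\]
The objective is $\mathrm{tr}(Z)/2$. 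Dropping the rank constraint gives an SDP whose Lagrange dual is exactly (\ref{PSA_opt2}): the Lagrangian $\tfrac12\mathrm{tr}Z+\sum_i\lambda_i(\cdot)$ is bounded below over $Z\succeq0$ iff $P_\lambda\succeq0$, and its infimum is then $-\lambda^\top u^{los}(8\pi/\mu_0)$. So weak duality gives $J_p^*\ge J_d^*$.

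To close the gap, I first argue that the SDP attains its optimum and that strong duality holds between the SDP and (\ref{PSA_opt2}). The dual is strictly feasible ($\lambda=0$ gives $P_0=I\succ0$). The primal is strictly feasible whenever the equality constraints are feasible, because any feasible cross block can be padded with large diagonal blocks. Next I use the Pataki/Barvinok rank bound: an SDP with $m=6$ linear equality constraints has an optimal extreme solution with $r(r+1)/2\le6$, i.e. $r\le2$. Since rank $\le2$ is exactly what the two-column factorization $[X;Y]$ allows (the sine/cosine pair), the relaxation is tight and the SDP optimum is the global optimum of (\ref{PSA_opt1}). This is the key structural reason the AC (two-phase) model is convex-solvable, and it is where I expect the main care to be needed. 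One must check that the rank reduction preserves optimality (the linear objective is constant along the reduction face direction or decreases), and handle the degenerate case where the optimal face is unbounded. The latter is excluded because $\mathrm{tr}Z$ is coercive.

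Finally, I would recover the explicit form from complementary slackness $P_{\lambda^*}Z^*=0$. Writing $P_{\lambda^*}=\begin{bmatrix}I&R\\R^\top&I\end{bmatrix}$, the first block row gives $X+RY=0$ and the second gives $R^\top X+Y=0$. Thus $Y^*=-R_{\lambda^*}^\top X^*$, which is the stated relation $[s_k^*,c_k^*]=-R_{\lambda^*}^\top[s_j^*,c_j^*]$, and we also get $RR^\top X=X$. Substituting $Y=-R^\top X$ into the constraints turns them into linear equations in $L\triangleq XX^\top$, namely $-\mathrm{tr}[R_{\lambda^*}\mathcal{Q}^{i\top}L]=\tfrac{8\pi}{\mu_0}u^{los}_{(i)}$, which is (\ref{PSA_R_lambda}) with $L$ the symmetric matrix of $\mathfrak{L}$'s.

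It remains to factor $L=XX^\top=s_js_j^\top+c_jc_j^\top$ in the form claimed. Set $s_{j(l)}=\overline{\mu}_{j(l)}\cos\theta_l$ and $c_{j(l)}=\overline{\mu}_{j(l)}\sin\theta_l$. Then
\[
L_{ll}=\overline{\mu}_{j(l)}^2=\mathfrak{L}_l,\qquad L_{lm}=\overline{\mu}_{j(l)}\overline{\mu}_{j(m)}\cos(\theta_l-\theta_m),
\]
which yields the amplitude and phase-difference formulas. Consistency of the three phase differences is guaranteed by $L\succeq0$ having rank $\le2$. This comes from the tightness step, and I would verify it there, since a rank-3 $L$ would not admit such angles.
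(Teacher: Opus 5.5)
Most of your argument is sound. The lifting to $Z$, the identification of the SDP dual with (\ref{PSA_opt2}), Slater on both sides, and the recovery of $[s_k^*,c_k^*]=-R_{\lambda^*}^\top[s_j^*,c_j^*]$ and of (\ref{PSA_R_lambda}) from $P_{\lambda^*}Z^*=0$ all work. That last part is what the paper's appendix does via $(I_2\otimes P_{\lambda^*})m=0$, after noting that $L(m,\lambda^*)$ is a global lower bound once $P_{\lambda^*}\succeq 0$. The paper takes the existence of a feasible stationary point (zero gap) for granted; you try to prove it, and that is where the proposal breaks. The inequality $r(r+1)/2\le 6$ does not give $r\le 2$, since $r=3$ gives exactly $6$. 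The Pataki--Barvinok bound therefore only provides an optimal $Z^*$ of rank at most $3$, which cannot be written as $[X;Y][X;Y]^\top$ with the two sine/cosine columns. So neither tightness for (\ref{PSA_opt1}) nor the rank-$\le 2$ property of your $L=XX^\top$, which your phase construction needs, is established. Barvinok's sharper bound for $m=(r+1)(r+2)/2$ does not rescue it either. The SDP feasible set is unbounded, and the bounded optimal set lies in the face $\{KSK^\top: S\in\mathbb{S}^k_+\}$, with $K$ an orthonormal basis of $\ker P_{\lambda^*}$ and $k\le 3<r+2=4$.

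The conclusion is true, but proving it needs the structure of $Q_{j\leftarrow k}$. A rank-$3$ optimum forces $k=3$, i.e.\ $R_{\lambda^*}$ orthogonal, because $R_{\lambda^*}R_{\lambda^*}^\top X^*=X^*$ with $X^*$ invertible. Work in the LOS frame; the general case follows by conjugating with $C^{A/L_{j\leftarrow k}}$. Reading off $\Psi_f,\Psi_\tau$:

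the range of $\lambda\mapsto R_\lambda$ is $\{R: R_{11}=-2R_{22},\ R_{22}=R_{33},\ R_{23}=-R_{32}\}$;

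the null space of $W\mapsto Q_{j\leftarrow k}\mathrm{vec}(W)$ is $\mathcal{N}=\{\mathrm{diag}(n_1,M): M\in\mathbb{S}^2,\ \mathrm{tr}M=2n_1\}$;

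the only orthogonal elements of the range are $\mathrm{diag}(-2\alpha,\alpha I_2+\delta J)$, with $J$ the planar $90^\circ$ rotation, $\alpha=\pm\tfrac12$ and $\delta=\pm\tfrac{\sqrt3}{2}$.

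Taking $K=\tfrac{1}{\sqrt2}[I_3;-R_{\lambda^*}^\top]$ gives $Z_{12}=-\tfrac12 SR_{\lambda^*}$. The optimal set is then $\mathbb{S}^3_+$ intersected with an affine set whose direction space is $\{S\in\mathbb{S}^3: SR_{\lambda^*}\in\mathcal{N}\}=\{NR_{\lambda^*}^\top: N=\mathrm{diag}(0,M),\ \mathrm{tr}M=0\}$. This space is two-dimensional and consists of traceless, hence indefinite, matrices. Moving from a rank-$3$ optimum along such a direction until $\det S=0$ gives an optimal point of rank at most $2$. With this argument in place of the rank count, your proof goes through.
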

\begin{remark}This global optimal result is useful for nonlinear controller design for a magnetically actuated robot swarm under the strong duality assumption \cite{takahashi2025noda_mmh}.\end{remark}
To suppress unintended coupling among nonadjacent satellites in close proximity, we assume the use of multiple frequency allocations to confine electromagnetic interactions to neighboring satellites. 
\begin{assumption}
\label{PSA_AC_backet_condition}
Control pairs are defined by assigning distinct AC angular frequencies $\omega_{fk}$ for $k\in[-n,\ n]$ to adjacent satellite groups, as illustrated in Fig.~\ref{PSA_figure_grid_linear_approximation} (Please refer the detailed selection of angular frequencies \cite{takahashi2024neural}.). 
\end{assumption}
\noindent
We introduce a power index $W_{\mathrm{power}}$ \cite{takahashi2024neural} by Lemma~\ref{PSA_lemma_2SC_ODA}
\begin{equation}
\label{PSA_power_index}
W_{\mathrm{power}}\triangleq R_{\mathrm{coil}}\int_T  c_{\mathrm{coil}}^2(\tau){\mathrm{d}\tau}/{T}
=({R_{\mathrm{coil}}}/{\gamma_{\mu/c}^2})J_{\mathrm{d}}
\end{equation}
where the coil resistance $R_{\mathrm{coil}}$ in (\ref{PSA_2_1::coil dipole moment}), $\gamma_{\mu/c}$ [m$^2$] is the coil design ratio to convert $c_{\mathrm{coil}}$ into $\bm{\mu}$ in (\ref{PSA_2_1::coil dipole moment}), and the lower bound $J_d$ in (\ref{PSA_opt2}) into the $W_{\mathrm{power}}$. The definition of $\bm{\mu}$ and $R_{\mathrm{coil}}$ in (\ref{PSA_2_1::coil dipole moment}) derives by the results of Lagrange dual problem in (\ref{PSA_opt2})
$\gamma_{\mu/c}\triangleq \pi N_t a_{\mathrm{coil}}^2$ and $\tfrac{R_{\mathrm{coil}}}{\gamma_{\mu/c}^2}
=\tfrac{{2p_c}/r_{\mathrm{coil}}^2}{\pi^2  N_t a_{\mathrm{coil}}^3}$.
\subsection{Recursive Disturbance Elimination Model}
We estimate the maximum electric power required to eliminate the orbital disturbances in (\ref{PSA_disturbance_position_relative}) via magnetic-field interactions. We use a simplified analytical model introduced in the previous subsection. We derive $j$th error disturbance introduced in subsection~\ref{PSA_section3}
\begin{equation}
\label{PSA_disturbance_position_relative_2}
f_{d(j)}=m_{\mathrm{sat}}K_{\mathrm{orb}}(t)\ (jd_{\mathrm{sat}}\hat{p}),\quad \|\hat{p}\|_2=1
\end{equation}
where $d_{\mathrm{sat}}\hat{p}$ is the constant spacing. We define the equilibrium conditions between adjacent satellites.
\begin{definition}[
``Bucket-Brigade'' Model]
\label{PSA_Bucket_Brigade_model}
Equilibrium conditions of linear formation under Assumption~\ref{PSA_AC_backet_condition} are 
\begin{equation}
\label{PSA_Equilibrium_conditions}
    \begin{aligned}
        f_{(j)\leftarrow (j-1)} + f_{d(j)} + f_{(j)\leftarrow (j+1)}
        &= 0\\
        \tau_{(j)\leftarrow (j-1)} + \tau_{d(j)} + \tau_{(j)\leftarrow (j+1)}
        &= 0\\
    \end{aligned}
\end{equation}
where we use $f_{d(j)}$ in (\ref{PSA_disturbance_position_relative_2}) and $\tau_{d(j)}=0$.
\end{definition}
Then, we uniquely derive the required feedforward input for a recursive environmental disturbance elimination, and the proof is deferred to Appendix~\ref{PSA_proof_Bucket_Brigade_Model}.
\begin{lemma}
\label{PSA_lemma_Bucket_Brigade_Model}
The required magnetic force and torque for the recursive disturbance elimination in Definition~\ref{PSA_Bucket_Brigade_model} are 
\begin{equation}
\label{PSA_hat_U_time_varying}
\begin{aligned}
&u_{(j-2)\leftarrow (j-1)}
\triangleq\ \chi_{\mathrm{sys}}(\overline{m}_{\mathrm{sys}},n)L{(n,j)}\hat{U}{({r_l},t)}\\
&
\begin{aligned}
&\chi_{\mathrm{sys}}(\overline{m}_{\mathrm{sys}},n)\triangleq \overline{m}_{\mathrm{sys}}\frac{n(n+1)}{6(2n+1)^3}\in\mathbb{R}\\
&L_{(n,j)}\triangleq
    \begin{bmatrix}
\frac{(n-j+2)(n+j-1)}{n(n+1)}I_3&O_3\\
O_3&\frac{(n-j+2)(n-j+3)(2n+j-1)}{n(n+1)(2n+1)}I_3
\end{bmatrix}\\
&\hat{U}{({r_l},t)}\triangleq
\begin{bmatrix}
3K_{\mathrm{orb}}(t)R_l(t)\\
R_l(t) \times K_{\mathrm{orb}}(t)R_l(t)
\end{bmatrix}
\in\mathbb{R}^{6}
\end{aligned}
\end{aligned}
\end{equation}
where $j{\in[2,\ n+1]}$, $L{(n,\ j)}\in\mathbb{R}^{6\times 6}$, and orbital disturbance coefficient matrix $K_{\mathrm{orb}}(t)$ in (\ref{PSA_disturbance_position_relative}).
\end{lemma}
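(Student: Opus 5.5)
The plan is to solve the equilibrium conditions (\ref{PSA_Equilibrium_conditions}) of Definition~\ref{PSA_Bucket_Brigade_model} by backward recursion from the free edge satellite $n$ inward. Assumption~\ref{PSA_AC_backet_condition} restricts each satellite to interact only with its two neighbours. The edge satellite $n$ has only the neighbour $n-1$, so $f_{(n)\leftarrow(n-1)}=-f_{d(n)}$ and $\tau_{(n)\leftarrow(n-1)}=0$. To propagate inward I will use the interaction laws of the time-averaged two-body model (\ref{CGL_near_field_interaction_model}), which I would verify directly from the structure of $\Psi_f,\Psi_\tau$ with swapped $r_{j\leftarrow k}\mapsto -r_{j\leftarrow k}$. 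These are action--reaction for forces, $f_{(k)\leftarrow(j)}=-f_{(j)\leftarrow(k)}$, and the \emph{non}-symmetric torque law $\tau_{(k)\leftarrow(j)}=-\tau_{(j)\leftarrow(k)}-r_{j\leftarrow k}\times f_{(j)\leftarrow(k)}$, which is total angular momentum conservation of the isolated pair. Substituting these into (\ref{PSA_Equilibrium_conditions}) for $j,j+1,\dots,n$ gives a first-order recursion along the chain, with a closed form for each link.

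For the force I will show by induction that the link force transmitted across the link between satellites $(j-1)$ and $(j-2)$ equals, up to the sign fixed by the index convention of $u_{(j-2)\leftarrow(j-1)}$, the sum of all disturbances outboard of it, $\sum_{i=j-1}^{n} f_{d(i)}$. Using (\ref{PSA_disturbance_position_relative_2}) this equals
\[
m_{\mathrm{sat}}d_{\mathrm{sat}}K_{\mathrm{orb}}\hat p\sum_{i=j-1}^{n} i=\tfrac{1}{2}(n-j+2)(n+j-1)\,m_{\mathrm{sat}}d_{\mathrm{sat}}K_{\mathrm{orb}}\hat p .
\]
Then I substitute $m_{\mathrm{sat}}=\overline m_{\mathrm{sys}}/(2n+1)^2$ and $d_{\mathrm{sat}}\hat p=R_l/(2n+1)$. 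This gives $\overline m_{\mathrm{sys}}(n-j+2)(n+j-1)K_{\mathrm{orb}}R_l/(2(2n+1)^3)$, which is exactly $\chi_{\mathrm{sys}}$ times the upper block of $L_{(n,j)}$ times $3K_{\mathrm{orb}}R_l$.

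For the torque, the recursion with the torque law shows that the transmitted torque is the total moment of the outboard disturbance forces about the receiving satellite. These forces are all parallel to $K_{\mathrm{orb}}\hat p$ with lever arms along $\hat p$, so I obtain
\[
m_{\mathrm{sat}}d_{\mathrm{sat}}^2\,(\hat p\times K_{\mathrm{orb}}\hat p)\sum_{i=j-1}^{n}(i-j+2)\,i ,
\]
with the lever-arm offset fixed by which satellite is the reference point. I will evaluate this sum in closed form as $(n-j+2)(n-j+3)(2n+j-1)/6$. Then $m_{\mathrm{sat}}d_{\mathrm{sat}}^2=\overline m_{\mathrm{sys}}r_l^2/(2n+1)^4$ and $R_l\times K_{\mathrm{orb}}R_l=r_l^2\,\hat p\times K_{\mathrm{orb}}\hat p$ give the lower block of $\chi_{\mathrm{sys}}L_{(n,j)}\hat U$. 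Uniqueness follows because each step of the recursion determines the next link input uniquely given the previous one and the boundary condition at the free edge.

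The main obstacle is bookkeeping rather than ideas. First, the torque law must be established carefully for the averaged magnetic interaction, since it does not satisfy plain action--reaction. Second, the index shift $(j-2)\leftarrow(j-1)$ and the reference point of the lever arm must be matched exactly so that the cubic sum produces $(n-j+2)(n-j+3)(2n+j-1)$ rather than a neighbouring expression. I would first check the cases $j=n+1$ and $j=n$ explicitly to fix these conventions and signs, and only then run the induction.
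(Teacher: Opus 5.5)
Your proposal is correct and follows essentially the same route as the paper. Both arguments run a backward recursion from the free edge satellite $n$, use action--reaction for the forces and pairwise angular-momentum conservation for the torques, and then substitute $m_{\mathrm{sat}}=\overline{m}_{\mathrm{sys}}/(2n+1)^2$ and $d_{\mathrm{sat}}\hat{p}=R_l/(2n+1)$.

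The only differences are cosmetic:
\begin{itemize}
\item You write the transmitted torque directly as the moment of the outboard disturbances about satellite $j-2$, giving the single sum $\sum_{i=j-1}^{n}(i-j+2)\,i$.
\item The paper writes it as the nested sum $d_{\mathrm{sat}}\hat{p}\times\sum_{k=j-1}^{n}f_{(k-1)\leftarrow(k)}$. Exchanging the order of summation shows the two coincide, and both evaluate to $(n-j+2)(n-j+3)(2n+j-1)/6$.
\item You propose to check the pairwise torque law from $\Psi_f$ and $\Psi_\tau$, whereas the paper simply asserts it as angular-momentum conservation.
\end{itemize}
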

\subsection{Convex-Optimization-based Power Estimation}
We finally derive a power consumption for formation keeping. We derive the power distribution trend required by the overall distributed space systems.
\begin{theorem}
\label{PSA_averaged_total_power}
Consider the grid-structured satellites on the user-defined plane with ${\overline{m}_{\mathrm{sys}}}=(2n+1)^2m_{\mathrm{sat}}$ and ${r_l}=(2n+1)d_{\mathrm{sat}}$. Then, the upper-bound of maximum power consumption $\overline{W}$ [W] along all satellites is
\begin{equation}
\label{PSA_maximum_power_consumption}
\overline{W}\triangleq \chi_{\mathrm{sys}(\overline{m}_{\mathrm{sys}},n)}\sup_{t\in\left[0,\ T_{\mathrm{orb}}\right)}\left[w^*_{({r_l},n,2,t)}= w^*_{({r_l},t)}\right]
\end{equation}
where $w^*$ is given by a convex optimization for $j_{\in[2,n+1]}$:
\begin{equation}
\label{PSA_symmetric_liner_formation_pair}
\begin{aligned}
&w^*_{({r_l},n,j,t)}\triangleq\frac{2R_{\mathrm{coil}}}{\gamma_{\mu/c}^2}\max_{\substack{\lambda_-\in\mathbb{R}^6\\
\lambda_+\in\mathbb{R}^6}}\sum_{x=\{-,+\}}
\frac{\lambda_x^\top L_{(n,\ j)}\hat{U}_{\left({r_l},t_{x}\right)}}{-{\mu_0}/{(8\pi)}}\\
&\ \quad\ \mathrm{s.t.}\ 
\left\{
\begin{aligned}
&
\begin{bmatrix}
E_3&R_{\lambda_+}\\
R_{\lambda_+}^\top &E_3
\end{bmatrix} \succeq 0,\ 
\begin{bmatrix}
E_3&R_{\lambda_-}\\
R_{\lambda_-}^\top &E_3
\end{bmatrix} \succeq 0
\end{aligned}
\right..
\end{aligned}
\end{equation}
\end{theorem}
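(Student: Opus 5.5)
The plan is to assemble the theorem from three earlier results: the bucket-brigade input of Lemma~\ref{PSA_lemma_Bucket_Brigade_Model}, the strong duality of Lemma~\ref{PSA_lemma_2SC_ODA}, and the power index (\ref{PSA_power_index}).

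\textbf{Per-satellite power in the decentralized architecture.} Under Assumption~\ref{PSA_AC_backet_condition}, each satellite $(j-1)$ interacts only with its two neighbours, and each neighbour pair uses its own AC frequency. Cross-frequency terms vanish in the first-order average (\ref{CGL_near_field_interaction_model}). Hence the dipole of satellite $(j-1)$ is a superposition of two sinusoids at distinct frequencies, and its time-averaged $c_{\mathrm{coil}}^2$ splits additively into the two pairwise contributions. This yields the sum over $x\in\{-,+\}$ in (\ref{PSA_symmetric_liner_formation_pair}). For each pair, the minimum dipole effort realizing the prescribed $u^{los}$ is the primal value of (\ref{PSA_opt1}). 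By Lemma~\ref{PSA_lemma_2SC_ODA} it equals the dual optimum $J_d$ of (\ref{PSA_opt2}), which is a concave maximization over an LMI and hence a convex program. Inserting the bucket-brigade input $u=\chi_{\mathrm{sys}}L_{(n,j)}\hat U_{(r_l,t_x)}$ gives the objective
\[
\frac{-\lambda^\top \chi_{\mathrm{sys}} L_{(n,j)}\hat{U}}{\mu_0/(8\pi)}.
\]
The scalar $\chi_{\mathrm{sys}}>0$ factors out of the objective, and the feasible set $P_\lambda\succeq0$ does not depend on it, so it can be pulled outside the maximization. Converting via (\ref{PSA_power_index}) with the factor $R_{\mathrm{coil}}/\gamma_{\mu/c}^2$ then gives $\chi_{\mathrm{sys}}w^*_{(r_l,n,j,t)}$. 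I would fix the factor $2$ from the convention $J_p=\|m\|^2/2$ against the definition of $W_{\mathrm{power}}$. The two pair problems share no variables, so the maximization over $(\lambda_-,\lambda_+)$ decouples into two independent duals, one with two LMI constraints each.

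\textbf{Reduction to $j=2$.} Next I would show that $w^*_{(r_l,n,j,t)}$ is maximized over $j\in[2,n+1]$ at $j=2$, which is the satellite adjacent to the centre that carries the accumulated disturbance. Two facts give this.
\begin{itemize}
\item Both diagonal scalars of $L_{(n,j)}$ are nonnegative and nonincreasing in $j$ on $[2,n+1]$. Check: $(n-j+2)(n+j-1)$ has derivative $-2j+1<0$. The torque factor $(n-j+2)(n-j+3)(2n+j-1)$ is also decreasing, since two factors drop linearly while one rises more slowly; I would verify this by a discrete difference.
\item The dual value $\sup_{P_\lambda\succeq0} -\lambda^\top v$ is a gauge, i.e.\ it is positively homogeneous and convex in $v$.
\end{itemize}
The difficulty is that force and torque scale by different factors, so the second fact alone does not give monotonicity. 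I would argue instead at the primal level. Scaling the force and torque components by factors in $[0,1]$ relative to the $j=2$ values yields a command that is achievable with no larger minimum power. One way to see this is to use the norm structure of the gauge together with its symmetry under $v\mapsto -v$, which comes from $\lambda\mapsto-\lambda$. Failing that, I would simply define $\overline W$ through $j=2$ as an upper bound, dominating the $j>2$ cases by noting that the $L$ factors are largest there.

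\textbf{Supremum over the orbit.} Finally, take the supremum over $t\in[0,T_{\mathrm{orb}})$, which is finite because $K_{\mathrm{orb}}$ is bounded and periodic. The endpoint satellites and the centre satellite need separate handling: they have only one pair, or symmetric pairs, so their power is bounded by the same expression. The main obstacle I expect is the monotonicity step in $j$ under anisotropic scaling of force and torque. For that step I would first try the primal argument, using the explicit optimizer of Lemma~\ref{PSA_lemma_2SC_ODA}.
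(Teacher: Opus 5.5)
\textbf{Gap 1: the sum over $x$ and the factor $2$.} The bookkeeping that should produce (\ref{PSA_symmetric_liner_formation_pair}) is wrong. You read the sum over $x\in\{-,+\}$ as the two neighbour pairs of satellite $(j-1)$ on the same line, superposed at two frequencies. That reading cannot give the stated expression. By Lemma~\ref{PSA_lemma_Bucket_Brigade_Model}, the pair $(j-1,j)$ carries $\chi_{\mathrm{sys}}L_{(n,j+1)}\hat{U}_{(r_l,t)}$. Your two terms would therefore be $L_{(n,j)}\hat{U}_{(r_l,t)}$ and $L_{(n,j+1)}\hat{U}_{(r_l,t)}$ at the same time. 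The theorem instead has the same $L_{(n,j)}$ evaluated at two different times $t_x$, which you never define.

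In the paper, $t_-=t$ and $t_+=t+T_{\mathrm{orb}}/4$. Each satellite lies on two orthogonal grid lines, and the orthogonal line is modelled as the $\pi/2$-phase-shifted copy of the same linear formation. So the two terms are the row pair and the column pair with the same index $j$.

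The factor $2$ is also not a normalization mismatch. The time average of $\|s\sin\omega t+c\cos\omega t\|^2$ is $(\|s\|^2+\|c\|^2)/2$, so $(R_{\mathrm{coil}}/\gamma_{\mu/c}^2)J_{\mathrm{d}}$ in (\ref{PSA_power_index}) is already the averaged coil power of the pair. The $2$ counts the mirror pair $(-j+2,-j+1)$ on the other side of the centre, which needs the same power.

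You are right that $\chi_{\mathrm{sys}}$ factors out by homogeneity and that the two duals decouple. Without these two identifications, though, the argument does not reach the theorem's formula.

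\textbf{Gap 2: the reduction to $j=2$.} First note that $L_{(n,2)}=I_6$; this is exactly why $w^*_{(r_l,n,2,t)}=w^*_{(r_l,t)}$ does not depend on $n$.

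The primal claim you intend to try first is that shrinking the force and torque blocks by independent factors in $[0,1]$ never raises the minimum power. This cannot follow from the gauge's homogeneity and evenness, which only give isotropic scaling, and it is false for general commands. The minimal pair cost is the minimal nuclear norm of $s_js_k^\top+c_jc_k^\top$ under the linear constraints. Work in the line-of-sight frame at unit separation, with $u$ scaled by $8\pi/\mu_0$. The command whose only nonzero entries are $f_y=3$, $\tau_z=-2$ costs $1$: take $s_k$ and $s_j$ as the first and second unit vectors and $c_j=c_k=0$. The command with $f_y=3$ alone costs $3$. So removing the torque triples the cost.

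The paper does not prove this monotonicity either. It reads the dominance of $j=2$ off the fact that both diagonal coefficients of $L_{(n,j)}$ are largest there (equal to one), which is your fallback. A rigorous version would have to use the specific force-to-torque ratio of $\hat{U}$.
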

\begin{proof}
The power consumption between $(j-2)$th and $(j-1)$th satellites for $j{\in[2,\ n+1]}$ under the decentralized control of Assumption~\ref{PSA_AC_backet_condition} is derived the convex optimizaiton:
\begin{equation}
\label{PSA_linear_formation_power_estimation}
\max_{\lambda \in \mathbb{R}^{6}}\ J_{\mathrm{d}}=\frac{\lambda^\top u_{(j-2)\leftarrow (j-1)}(t)}{-\mu_0/(8\pi)}\ \mathrm{s.t.}\quad 
\begin{bmatrix}
E_3&R_\lambda\\
R_\lambda^\top &E_3
\end{bmatrix} \succeq 0
\end{equation}
where $r_{(j-2)\leftarrow (j-1)}(t) = - d_{\mathrm{sat}}\hat{p}(t)$ derives $R_\lambda$ 
$$
\mathrm{vec}(R_\lambda(t))=Q^\top_{{(j-2)\leftarrow (j-1)}}(- d_{\mathrm{sat}}\hat{p}(t))\lambda.
$$
Each satellite belongs to two linear formations under the decentralized control scheme and another linear formation exists as the $\pi/2$-phase-shifted linear formation. We define these two phase-shifted times $t_{-,+}\in\mathbb{R}$ in (\ref{PSA_two_phase_shifted_time}):
\begin{equation}
    \label{PSA_two_phase_shifted_time}
\forall t,\quad  t_- \triangleq t,\quad t_+ \triangleq t+{T_{\mathrm{orb}}}/{4},\quad T_{\mathrm{orb}}\triangleq{2\pi}/{\omega_{xy}}
\end{equation} 
Since the linear formation spans satellites from $-n$ to $n$, we account for the symmetric satellite pairs from $-n$ to $ 0$ by multiplying (\ref{PSA_linear_formation_power_estimation}) by a factor of two. Therefore, the total power between $(j-2)$th and $(j-1)$th satellite pair and its symmetric $(-j+2)$th and $(-j+1)$th satellite pair for $j{\in[2,\ n+1]}$ 
is calculated as $\chi_{\mathrm{sys}(\overline{m}_{\mathrm{sys}},n)}w^*_{({r_l},n,j,t)}$ [W] through the convex optimization in (\ref{PSA_symmetric_liner_formation_pair}). The coefficients in (\ref{PSA_hat_U_time_varying}) show that the maximum control force and torque is given $j=2$ and work on the center satellite $u_{(j-2)\leftarrow (j-1)}=\chi_{\mathrm{sys}}(\overline{m}_{\mathrm{sys}},n)\hat{U}$. Thus, the problem in (\ref{PSA_linear_formation_power_estimation}) derives $\overline{W}$ along all satellites in (\ref{PSA_maximum_power_consumption}).
\end{proof}
\begin{figure}[!tb]
    \centering
    \begin{minipage}[b]{1\FigWidth}
        \centering
        \subfloat[
        Averaged squared dipole moment $M(r_l,n)$ 
        for formation keeping.]{\includegraphics[width=\linewidth]{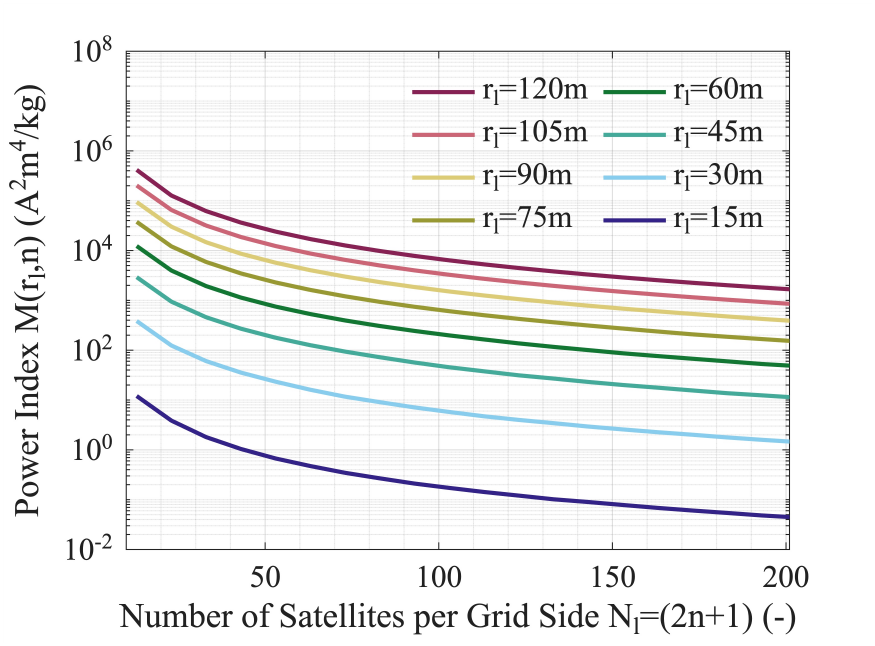}\label{PSA_averaged_power}}
    \end{minipage}\\
    \begin{minipage}[b]{1\FigWidth}
        \centering
        \subfloat[Surface area ratio of the distributed system to the monolithic system.]{\includegraphics[width=\linewidth]{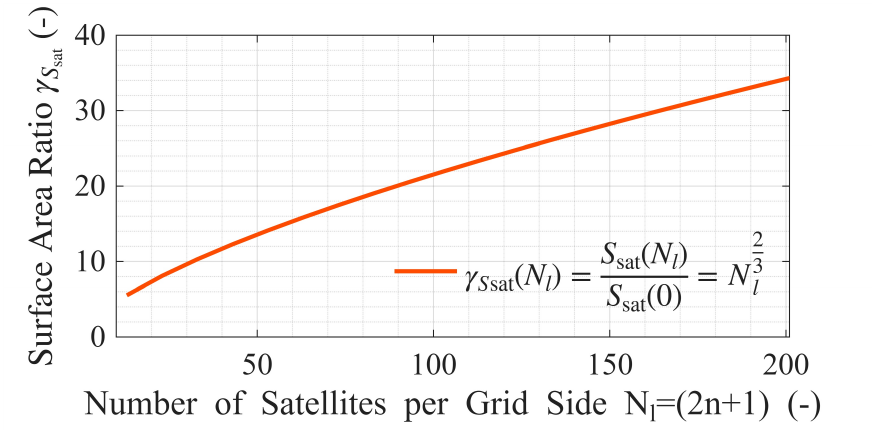}\label{PSA_surface_are_ratio}}
    \end{minipage}
    \caption{The averaged total power consumption per total system mass $m_{\mathrm{sys}}$ for formation keeping during $T_{\mathrm{orb}}$ and the ratio $\gamma_{S_{\mathrm{sat}}} (N_l)$ in (\ref{PSA_ratio_surface_area}) of total surface $S_{\mathrm{sat}}(n)$ with the monolithic structure one.}
    \label{PSA_fig: optimization results case 11}
\end{figure}
\section{Discussion}
This section presents the underlying trends and trade-off between the power consumption and the number of satellites for magnetic formation-keeping. 
\subsection{Peak Power and Number of Satellites Tradeoff}
We investigate the trend in peak power for magnetic formation keeping. The results in Theorem~\ref{PSA_averaged_total_power} show that, 
as the total system mass $\overline{m}_{\mathrm{sys}}$ increases, the upper bound of power consumption grows linearly regardless of the array size $r_l$ or the number of satellites $N_{\mathrm{all}}$, which is evident from ${W}_{\oint}$ in (\ref{PSA_averaged_power_consumption}) and $\overline{W}$ in (\ref{PSA_maximum_power_consumption}). Moreover, since $w^*_{({r_l},n,n,t)}$ is independent of $n$ if $j=n$, i.e., $w^*_{({r_l},n,n,t)}= w^*_{({r_l},t)}$, the result in (\ref{PSA_maximum_power_consumption}) indicates that $\overline{W}$ is linear with $\chi_{\mathrm{sys}(\overline{m}_{\mathrm{sys}},n)}$ along with the constant $\sup_{t\in\left[0,\ T_{\mathrm{orb}}\right)}w^*_{({r_l},t)}$ for given formation length $r_l$. Since $\chi_{\mathrm{sys}(\overline{m}_{\mathrm{sys}},n)}\rightarrow 0$ as $n\rightarrow \infty$, $\overline{W}$ converges to zero and the formation-keeping constraints imposed on the center satellite become less restrictive. This indicates the advantage of a distributed architecture based on magnetic-field interactions. 
\subsection{Minimization Trend of Total Power Consumption}
\label{total_power_consumption}
The total power budget ${W}_{\oint}$ in (\ref{PSA_averaged_power_consumption}) is evaluated numerically since an analytical expression is not available.
\begin{corollary}
Consider the grid-structured satellites on the user-defined plane with ${\overline{m}_{\mathrm{sys}}}=(2n+1)^2m_{\mathrm{sat}}$ and ${r_l}=(2n+1)d_{\mathrm{sat}}$. The upper-bound of its averaged total power consumption ${W}_{\oint}(\overline{m}_{\mathrm{sys}},{r_l},n)$ for formation keeping during one orbit $T_{\mathrm{orb}}$ is 
\begin{equation}
\label{PSA_averaged_power_consumption}
{W}_{\oint}\triangleq{\chi_{\mathrm{sys}\ (\overline{m}_{\mathrm{sys}},n)}}\int_{0}^{T_{\mathrm{orb}}}(2n+1)\sum_{j=2}^{n+1}\ {w^*_{({r_l},n,j,t)}}\ \frac{\mathrm{d}t}{T_{\mathrm{orb}}}
\end{equation}
where $w^*_{({r_l},n,j,t)}$ for $j_{\in[2,n+1]}$ is the optimal solution of the convex optimization in (\ref{PSA_symmetric_liner_formation_pair}).
\end{corollary}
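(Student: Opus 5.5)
The corollary follows by summing, over all interacting pairs in the grid, the per-pair quantities already built in the proof of Theorem~\ref{PSA_averaged_total_power}, and then averaging over one orbit. Fix $t\in[0,T_{\mathrm{orb}})$. By Assumption~\ref{PSA_AC_backet_condition}, each control pair oscillates at its own frequency. The first-order averaged model (\ref{CGL_near_field_interaction_model}) therefore decouples, and the total instantaneous averaged power is the sum of the minimal powers of the individual pairs. For a single pair $((j-2),(j-1))$, Lemma~\ref{PSA_lemma_Bucket_Brigade_Model} fixes the required command $u_{(j-2)\leftarrow(j-1)}=\chi_{\mathrm{sys}}L_{(n,j)}\hat U_{(r_l,t)}$. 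Lemma~\ref{PSA_lemma_2SC_ODA} then says that the nonconvex minimal-dipole problem (\ref{PSA_opt1}) has no duality gap. Its optimal value is therefore the optimum of the SDP (\ref{PSA_linear_formation_power_estimation}), and by (\ref{PSA_power_index}) it converts to watts through the factor $R_{\mathrm{coil}}/\gamma_{\mu/c}^2$.

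\textbf{Step 1: pull out $\chi_{\mathrm{sys}}$.} The dual objective is linear in $u$, and the constraint set does not depend on $u$. Hence the per-pair value scales exactly like the command. The coefficient $\chi_{\mathrm{sys}}$ can be taken outside, leaving $\max_\lambda \lambda^\top L_{(n,j)}\hat U/(-\mu_0/8\pi)$.

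\textbf{Step 2: assemble $w^*_{(r_l,n,j,t)}$.} As in the proof of Theorem~\ref{PSA_averaged_total_power}, two pairs are combined into one convex program (\ref{PSA_symmetric_liner_formation_pair}):
\begin{itemize}
\item the mirror pair $((-j+2),(-j+1))$ has the same magnitude of command, which gives the factor $2$;
\item the orthogonal grid line is the $\pi/2$-phase-shifted formation evaluated at $t_+$ in (\ref{PSA_two_phase_shifted_time}).
\end{itemize}
The two multipliers $\lambda_\pm$ are decoupled, both in the objective and in the constraints. The joint maximum is therefore the sum of the two separate maxima. So $\chi_{\mathrm{sys}}w^*_{(r_l,n,j,t)}$ is the power of the symmetric pair at index $j$ on one grid line, together with its phase-shifted companion.

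\textbf{Step 3: count pairs.} Summing over $j\in[2,n+1]$ covers every adjacent pair along one linear formation, on both sides of the center. There are $2n+1$ parallel lines in each direction, and $w^*$ already contains both directions through $t_\pm$. Multiplying by $(2n+1)$ therefore counts every magnetic link in the $(2n+1)^2$ grid exactly once. Averaging over $t\in[0,T_{\mathrm{orb}})$ then gives (\ref{PSA_averaged_power_consumption}).

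\textbf{Why this is an upper bound.} The bucket-brigade allocation of Definition~\ref{PSA_Bucket_Brigade_model} with nearest-neighbour frequency separation is one feasible realization of (\ref{feedforward_conditions}). The minimal total power over all group structures $\mathcal G$ is therefore at most the computed value.

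\textbf{Main obstacle: the counting in Step 3.} I must verify that the $t_+$ term together with the factor $(2n+1)$ counts all row and column lines without double counting. I also need to check two further points:
\begin{itemize}
\item that the phase shift $T_{\mathrm{orb}}/4$ really represents the orthogonal line for every row, since $\hat U$ depends only on the line direction $\hat p$ and on $r_l$, not on the row offset;
\item that the disturbance model (\ref{PSA_disturbance_position_relative_2}) is applied to each line separately, each line acting as an independent bucket brigade.
\end{itemize}
I would settle both by invoking the linearity of $f_d$ in $p$. This lets the disturbance on a grid node be split into its row and column components, each eliminated by its own line.
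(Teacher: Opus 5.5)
Your proposal is correct and follows the paper's own argument: take the per-pair power $\chi_{\mathrm{sys}}w^*_{(r_l,n,j,t)}$ from the proof of Theorem~\ref{PSA_averaged_total_power}, sum over $j\in[2,n+1]$, multiply by the $2n+1$ parallel lines, and average over $T_{\mathrm{orb}}$; you add justification the paper omits (frequency decoupling, positive homogeneity of the dual value, the link count, and why a feasible decentralized scheme gives an upper bound). One caveat on your side check: $\hat U_{(r_l,t_+)}$ shifts $K_{\mathrm{orb}}$ as well as $\hat p$, so treating the $t_+$ term as the orthogonal line at time $t$ is a modeling convention you inherit from Theorem~\ref{PSA_averaged_total_power}, and linearity of $f_d$ in $p$ only supplies the row/column split and row-offset independence, not that identification.
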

\begin{proof}
The total power between $(j-2)$th and $(j-1)$th satellite pair (and its symmetric $(-j+2)$th and $(-j+1)$th satellite pair) for the orthogonal linear formation is calculated as $\chi_{\mathrm{sys}(\overline{m}_{\mathrm{sys}},n)}w^*_{({r_l},n,j,t)}$ [W] through the convex optimization in (\ref{PSA_symmetric_liner_formation_pair}). Then, (\ref{PSA_averaged_power_consumption}) sums the power distribution trend required by the overall distributed system.
\end{proof}
We define the averaged squared dipole moment $M(r_l,n)$ as the averaged total power consumption, normalized by the system mass $\overline{m}_{\mathrm{sys}}$ and scaled by $R_{\mathrm{coil}}/\gamma_{\mu/c}^2$
$$
M({r_l},n)\triangleq{{W}_{\oint}(\overline{m}_{\mathrm{sys}},{r_l},n)}/({{\overline{m}_{\mathrm{sys}}} ({R_{\mathrm{coil}}}/{\gamma_{\mu/c}^2}) })
$$
and Figure~\ref{PSA_averaged_power} shows its values for an altitude of $500$ km, an inclination of $45^\circ$, $\theta_0=0$, and a passively stable orbital plane of $(\theta_P,\theta_{ZmXY})=(30^\circ,0^\circ)$. Although the total power scales with $(2n+1)^2$, the formation-keeping power decreases as the number of satellites increases; similar trends were observed for other parameter settings. This suggests that, for magnetic-field-interaction-based formation keeping, increasing the number of satellites while holding total mass and array size fixed is energetically favorable because shorter inter-satellite distances improve magnetic actuation efficiency. 
\subsection{Maximization Trend of Solar Panel Area}
Moreover, the increased number of satellites increases the available electrical power and communication performance. For a given constant satellite volume $V_{\mathrm{sys}}$, we can derive satellite size $2a_{\mathrm{sat}}$ and total surface area of overall satellites $S_{\mathrm{sat}}\triangleq N_l^2\times 6(2a_{\mathrm{sat}})^2$ as
$$
2a_{\mathrm{sat}}\triangleq\left(\frac{V_{\mathrm{sys}}}{N_l^2}\right)^{1/3}\Rightarrow\ S_{\mathrm{sat}}(N_l)=
\frac{6N_l^{2}}{N_l^{\frac{4}{3}}}V_{\mathrm{sys}}^{\frac{2}{3}}=
{6N_l^{\frac{2}{3}}}V_{\mathrm{sys}}^{\frac{2}{3}}.
$$
For $n$, the ratio of total surface $S_{\mathrm{sat}}(n)$ with the monolithic space structure one, i.e., $S_{\mathrm{sat}}(0)=6V_{\mathrm{sys}}^{2/3}$, is 
\begin{equation}
\label{PSA_ratio_surface_area}
\gamma_{S_{\mathrm{sat}}} (N_l)\triangleq {S_{\mathrm{sat}}(N_l)}/{S_{\mathrm{sat}}(0)}= {N_l^{\frac{2}{3}}}
\end{equation}
and this is illustrated in Fig.~\ref{PSA_surface_are_ratio}. This increase in the total surface area improves both electrical power and radio-frequency performance, including antenna gain, sidelobe level, and effective isotropic radiated power.
\subsection{Limitations and Future Work}
Excessively increasing the number of satellites would eventually lead to unrealistically small satellite sizes in practice. We can formulate the detailed design problem as a potentially nonconvex optimization \cite{shim2025feasibility}, yielding feasible design solutions and more refined scaling trends. 
Moreover, environmental disturbances cannot be perfectly modeled; nevertheless, stabilizing the satellites within a tolerable position error reduces power consumption, rather than relying on ideal feedforward.

The worst-case power estimate in (\ref{PSA_hat_U_time_varying}) is not directly suitable for satellite design. 
From the disturbance model in (\ref{PSA_disturbance_position_relative_2}), the largest disturbance within a formation is
$$
\sup_j f_{d(j)}=f_{d n}=\frac{{m}_{\mathrm{sat}}}{2}K_{\mathrm{orb}}(t)R_l(t)
$$
at the satellite located farthest from the formation center. In contrast, the largest feedforward force in (\ref{PSA_hat_U_time_varying}) increases approximately in proportion to $n$
$$
\sup_j f_{(j-2)\leftarrow (j-1)}=f_{0 \leftarrow 1}
=\frac{n(n+1)}{(2n+1)}\sup_j f_{d(j)}
$$
Note that the total system power decreases as the number of satellites increases, as shown in subsection~\ref{total_power_consumption}, and such an excessive force is canceled out among different groups, i.e., $f_{0 \leftarrow 1}+f_{0 \leftarrow -1}=0$. However, each group must independently generate such large intermediate forces in our decentralized control model. For a homogeneous satellite swarm, the design should be based on the worst-case satellite, leading to overly conservative component specifications. For detailed satellite configuration design, a different evaluation metric is required that better reflects the actual per-satellite power distribution.
\section{CONCLUSION}
\label{conclusion}
This correspondence presented a convex-optimization-based framework for evaluating the formation-keeping power of magnetically actuated distributed apertures. We showed that the nonconvex two-satellite dipole-allocation problem admits a uniquely characterizable global optimum through a convex formulation, and incorporated this result into a decentralized bucket-brigade model for large satellite swarms. The resulting analysis indicated that increasing the number of satellites can improve formation-keeping power efficiency under the considered grid-structured architecture. These results support the use of magnetically actuated swarms as a power-efficient approach for constructing large-scale space systems. 
\section*{APPENDIX}
\subsection{Proof of Lemma~\ref{PSA_lemma_2SC_ODA}}
\label{PSA_proof_2SC_ODA}
First, we show that the non-convex primal problem and its associated Lagrange dual problem have zero duality gap for the considered feasible command input. Equivalently, there exists an optimal Lagrange multiplier $\lambda^*$ such that the corresponding Lagrangian provides a tight global lower bound for the primal objective. Using Roth's column lemma, $\lambda^\top Q_{j\leftarrow k}({s_k}\otimes{s_j}+{c_k}\otimes{c_j})$ is converted into $\mathrm{tr}[R_\lambda^\top(s_js_k^\top+c_jc_k^\top)]$ where $R_\lambda$ satisfies $\mathrm{vec}(R_\lambda)=Q_{j\leftarrow k}^\top\lambda$. Thus, the Lagrangian is written as $L(m,\lambda)=\frac{1}{2}m^\top (I_2\otimes P_\lambda)m-\frac{8\pi}{\mu_0}\lambda^\top u^{los}_{j\leftarrow k}$ where $P_\lambda=[I_3, R_\lambda; R_\lambda^\top; I_3]$. For the dual optimal multiplier $\lambda^*$, we have $P_{\lambda^*}\succeq 0$. Therefore, $L(m,\lambda^*)$ gives a global lower bound of the primal objective. Hence, any feasible point satisfying the stationarity condition with $\lambda^*$ attains this lower bound and is globally optimal. The first-order optimality conditions are ${\partial L}/{\partial \lambda}=g(m)=0$ and ${\partial L}/{\partial m}=(I_2\otimes P_\lambda)m=0$.  The second equality yields $P_\lambda
[s_j;s_k]=P_\lambda[c_j;c_k]=0$. Thus,
$$
[s_j,c_j]+R_\lambda [s_k,c_k]=0,\ [s_k,c_k]+R_\lambda^\top [s_j,c_j]=0.
$$
In particular, at the optimal multiplier $\lambda^*$, $[s_k^*,c_k^*]=-R_{\lambda^*}^{\top}[s_j^*,c_j^*]$. This relationship also yields
$$
[s_j^*,c_j^*]
=
R_{\lambda^*}R_{\lambda^*}^{\top}[s_j^*,c_j^*],\ [s_k^*,c_k^*]
=
R_{\lambda^*}^{\top}R_{\lambda^*}[s_k^*,c_k^*],
$$
and therefore
$\|s_j^*\|= \|R_{\lambda^*}s_k^*\|=\|s_k^*\|,\ \|c_j^*\|= \|R_{\lambda^*}c_k^*\|=\|c_k^*\|$. Consequently, global-optimum solutions satisfy $\|s_k^*\|^2+\|c_k^*\|^2=\|s_j^*\|^2+\|c_j^*\|^2=\mathrm{const}$.

Next, introduce the lifted variables $\mathfrak{L}_i$ by
$$
G_j^*\triangleq s_j^* s_j^{*\top}+c_j^*c_j^{*\top}
\triangleq
\begin{bmatrix}
\mathfrak{L}_1&\mathfrak{L}_4&\mathfrak{L}_5\\
*&\mathfrak{L}_2&\mathfrak{L}_6\\
*&*&\mathfrak{L}_3
\end{bmatrix}.
$$
where satisfies $G_j^*\succeq 0$ and $\mathrm{rank}(G_j^*)\leq 2$. For arbitrary $x\in\mathbb{R}^9$ and $X\in\mathbb{R}^{3\times 3}$ satisfying $\mathrm{vec}(X)=x$, Roth's column lemma gives
$$
\begin{aligned}
x^\top(s_k^*\otimes s_j^*+c_k^*\otimes c_j^*)
&=
\mathrm{tr}\left[
X^\top(s_j^*s_k^{*\top}+c_j^*c_k^{*\top})
\right] \\
&=
-\mathrm{tr}\left[
R_{\lambda^*}X^\top
(s_j^*s_j^{*\top}+c_j^*c_j^{*\top})
\right] \\
&=
-\mathrm{tr}\left[
R_{\lambda^*}X^\top G_j^*
\right].
\end{aligned}
$$
Replacing $x$ in the above relationship by $Q_{j\leftarrow k}(i,:)^\top\in\mathbb{R}^9$ 
reduces the primal equality constraints to (\ref{PSA_R_lambda}). Finally, because $G_j^*=s_j^*s_j^{*\top}+c_j^*c_j^{*\top}$, there exist amplitudes and phases such that $s_j^*=\overline{\mu}_{j}\cos\theta_j$ and $c_j^*=\overline{\mu}_{j}\sin\theta_j$ in Lemma~\ref{PSA_lemma_2SC_ODA}. Together with $[s_k^*,c_k^*]=-R_{\lambda^*}^{\top}[s_j^*,c_j^*]$, this gives the global-optimum dipole allocation.
\subsection{Proof of Lemma~\ref{PSA_lemma_Bucket_Brigade_Model}}
\label{PSA_proof_Bucket_Brigade_Model}
The boundary conditions restrict the edge $(n)$th and $(-n)$th satellites to interact only with their neighboring $(n-1)$th and $(-n+1)$th satellites, respectively. These derive the required electromagnetic force and torque:
$$
\begin{aligned}
f_{(n)\leftarrow(n-1)} + f_{d(n)}&\triangleq 0,\quad f_{(-n)\leftarrow (-n+1)}+f_{d(-n)} \triangleq 0\\
\tau_{(n)\leftarrow(n-1)} &= \tau_{(-n)\leftarrow (-n+1)} \triangleq 0
\end{aligned}
$$
The angular momentum conservation satisfies the interaction between neighboring satellites:
\begin{equation*}
\tau_{(j)\leftarrow(j-1)}+\tau_{(j-1)\leftarrow(j)}+\frac{d_{\mathrm{sat}}}{2}\hat{p}\times(f_{(j)\leftarrow(j-1)}-f_{(j-1)\leftarrow(j)})=0
\end{equation*}
and the summary of the linear and angular momentum conservation between $j$ and $k$th neighboring satellites is
\begin{equation}
\label{PSA_linear_and_angular_momentum_conservation}
\left\{
\begin{aligned}
&f_{(j)\leftarrow (j-1)}+f_{(j-1)\leftarrow(j)}=0\\
&\tau_{(j)\leftarrow(j-1)}+\tau_{(j-1)\leftarrow(j)}+{d_{\mathrm{sat}}}\ \hat{p}\times f_{(j)\leftarrow(j-1)}=0\\
\end{aligned}
\right.
\end{equation}
where we use $f_{(k)\leftarrow(j)}=-f_{(j)\leftarrow (k)}$ for angular momentum conservation. The conservation in (\ref{PSA_linear_and_angular_momentum_conservation}) derive reaction force and torque in the $(n-1)$th satellites: 
$$
\begin{aligned}
f_{(n-1)\leftarrow(n)} &= -f_{(n)\leftarrow(n-1)}=f_{d(n)}\\
\tau_{(n-1)\leftarrow(n)}&=-\tau_{(n)\leftarrow(n-1)}-{d_{\mathrm{sat}}}\ \hat{p}\times f_{(n)\leftarrow(n-1)}\\
&=-{d_{\mathrm{sat}}}\ \hat{p}\times f_{(n)\leftarrow(n-1)}
\end{aligned}
$$
The conservation in (\ref{PSA_linear_and_angular_momentum_conservation}) also derive reaction force and torque in the $(n-2)$th satellites as follows
$$
\begin{aligned}
f_{(n-2)\leftarrow(n-1)} =& -f_{(n-1)\leftarrow(n-2)}=f_{d(n-1)}+f_{(n-1)\leftarrow(n)}\\
\tau_{(n-2)\leftarrow(n-1)}=&-\tau_{(n-1)\leftarrow(n-2)}-{d_{\mathrm{sat}}}\ \hat{p}\times f_{(n-1)\leftarrow(n-2)}\\
=&{d_{\mathrm{sat}}}\ \hat{p}\times (f_{(n-1)\leftarrow(n)}+f_{(n-2)\leftarrow(n-1)})\\
\end{aligned}
$$
where we use equilibrium condition $\tau_{(n-1)\leftarrow (n-2)}+\tau_{(n-1)\leftarrow (n)}=0$. Generalizing these for the $j_{\in[2,n+1]}$th satellite derives the required electromagnetic force as
\begin{equation}
    \label{PSA_general_force_balance}
    \begin{aligned}
        f_{(j-2)\leftarrow (j-1)}&=-f_{(j-1)\leftarrow (j-2)}= f_{d(j-1)} + f_{(j-1)\leftarrow (j)}\\
        &= \sum_{k=j-1}^nf_{d(k)}= \sum_{k=j-1}^n m_{\mathrm{sat}}K_{\mathrm{orb}}(t)(kd_{\mathrm{sat}}\hat{p})\\
        &=\frac{(n-j+2)(n+j-1)}{2/(m_{\mathrm{sat}}d_{\mathrm{sat}})} K_{\mathrm{orb}}(t)\hat{p}
    \end{aligned}
\end{equation}    
and the required electromagnetic torque as
\begin{equation}
\label{PSA_general_torque_balance}
      \begin{aligned}
       &\tau_{(j-2)\leftarrow (j-1)}
       ={d_{\mathrm{sat}}}\ \hat{p}\times \sum_{k=j-1}^{n} f_{(k-1)\leftarrow(k)}\\
=&{d_{\mathrm{sat}}}\ \hat{p}\times \sum_{k=j-1}^{n}\frac{(n-k+1)(n+k)}{2}m_{\mathrm{sat}}d_{\mathrm{sat}} K_{\mathrm{orb}}(t)\hat{p}\\
=&\frac{(n-j+2)(n-j+3)(2n+j-1)}{6/(m_{\mathrm{sat}}d_{\mathrm{sat}}^2)}\ \hat{p}\times K_{\mathrm{orb}}(t)\hat{p}
    \end{aligned}
\end{equation}
Following the bucket-brigade logic, the disturbances from both sides of the linear array accumulate and are ultimately canceled at the central $0$th satellite, i.e.,
$$
f_{(0)\leftarrow (1)}+f_{(0)\leftarrow (-1)}=0,\quad\tau_{(0)\leftarrow (1)}+\tau_{(0)\leftarrow (-1)}=0
$$
where $f_{d(0)}=0$. Applying ${\overline{m}_{\mathrm{sys}}}=(2n+1)^2m_{\mathrm{sat}}$ and ${r_l}=(2n+1)d_{\mathrm{sat}}$ into (\ref{PSA_general_force_balance}) and (\ref{PSA_general_torque_balance}) yields (\ref{PSA_hat_U_time_varying}).
\bibliographystyle{IEEEtaes} 
\bibliography{references} 
\end{document}